\ifdef{\Cref}{%
\@ifpackageloaded{autonum}{%
\autonum@generatePatchedReferenceCSL{Cref}%
}{}}{}%
\newcounter{todo}
\newcommand{\todo}[1][\empty]{
  \bookmarksetupnext{keeplevel}  
  \subpdfbookmark{TODO}{todo.\arabic{todo}}
  \bookmarksetup{keeplevel=false}  
  \refstepcounter{todo}
  \textbf{\textcolor{red}{\ifthenelse{\equal{#1}{\empty}}{TODO}{TODO:~\emph{#1}}}}
}
\newcommand{\declarecommand}[1]{\providecommand{#1}{}\renewcommand{#1}}
\declarecommand{\arxiv}[1]{%
  arXiv:~\href{https://arxiv.org/abs/#1}{#1}%
}
\let\originalleft\left
\let\originalright\right
\let\originalmiddle\middle
\renewcommand{\left}{\mathopen{}\mathclose\bgroup\originalleft}
\renewcommand{\right}{\aftergroup\egroup\originalright}
\renewcommand{\middle}{\originalmiddle}
\newcommand{\Base}{\ensuremath{\{-1,1\}}}
\newcommand{\bvt}[1]{\ensuremath{\chi_{#1}}}
\newcommand{\Tr}{\ensuremath{T}}
\DeclareDocumentCommand{\pair}{ O{\empty} m m}{
  \ensuremath{%
    \ifthenelse{\equal{#1}{big}}{\big\langle #2, #3 \big\rangle}{
      \ifthenelse{\equal{#1}{Big}}{\Big\langle #2, #3 \Big\rangle}{
        \ifthenelse{\equal{#1}{bigg}}{\bigg\langle #2, #3 \bigg\rangle}{
          \ifthenelse{\equal{#1}{Bigg}}{\Bigg\langle #2, #3 \Bigg\rangle}{
            \ifthenelse{\equal{#1}{normal}}{\langle #2, #3 \rangle}{\left\langle #2, #3 \right\rangle}
          }
        }
      }
    }
  }
}
\newcommand{\sgn}{\ensuremath{\mathrm{sgn}}}
\newcommand{\ot}{\ensuremath{\bar t}}
\newcommand{\op}{\ensuremath{\bar p}}
\newcommand{\orho}{\ensuremath{\bar \rho}}
\newcommand{\defas}{\ensuremath{\vcentcolon=}}
\newcommand{\defasAlign}{\vcentcolon\!\!&=}
\newcommand{\vt}[1]{\ensuremath{\boldsymbol{#1}}}
\newcommand{\Nto}[1]{\ensuremath{\{1,2,\dots,#1\}}}
\newcommand{\Ntoo}[1]{\ensuremath{[#1]}}
\newcommand{\RR}{\ensuremath{\mathbb{R}}}
\newcommand{\NN}{\ensuremath{\mathbb{N}}}
\newcommand{\set}[1]{\ensuremath{\mathcal{#1}}}
\newcommand{\III}{\ensuremath{\set{I}}}
\newcommand{\NNN}{\ensuremath{\set{N}}}
\newcommand{\PPP}{\ensuremath{\set{P}}}
\newcommand{\fOrig}{\ensuremath{f}}
\newcommand{\gOrig}{\ensuremath{g}}
\newcommand{\fFourier}[1]{\ensuremath{\hat \fOrig_{#1}}}
\newcommand{\gFourier}[1]{\ensuremath{\hat \gOrig_{#1}}}
\newcommand{\FktI}{\ensuremath{\phi}}
\newcommand{\FktIb}{\ensuremath{\FktI}}
\newcommand{\FktII}{\ensuremath{\psi}}
\newcommand{\FktIII}{\ensuremath{\FktII}}
\newcommand{\FktV}{\ensuremath{\FktIII}}
\newcommand{\FktIV}{\ensuremath{\xi}}
\newcommand{\FktVII}{\ensuremath{\gamma}}
\newcommand{\FktLemII}{\ensuremath{f}}
\newcommand{\mytheta}[1]{\ensuremath{\theta_{#1}}}
\newcommand{\thetarho}{\ensuremath{\mytheta{\rho}}}
\newcommand{\thetaplus}{\ensuremath{\thetarho^+}}
\newcommand{\thetaminus}{\ensuremath{\thetarho^-}}
\newcommand{\ai}{\ensuremath{a}}
\newcommand{\oai}{\ensuremath{\bar{\ai}}}
\newcommand{\bi}{\ensuremath{b}}
\newcommand{\obi}{\ensuremath{\bar{\bi}}}
\newcommand{\aii}{\ensuremath{\alpha}}
\newcommand{\oaii}{\ensuremath{\bar{\aii}}}
\newcommand{\bii}{\ensuremath{\beta}}
\newcommand{\obii}{\ensuremath{\bar{\bii}}}
\newcommand{\cc}{\ensuremath{c}}
\newcommand{\xx}{\ensuremath{x}}
\newcommand{\CST}[1][\empty]{\ensuremath{C_{\ifthenelse{\equal{#1}{\empty}}{\aii,\bii}{#1}}}}
\newcommand{\rhomin}{\ensuremath{\rho_\circ}}
\newcommand{\rhommin}{\ensuremath{\rho_-}}
\newcommand{\rhomax}{\ensuremath{\rho_+}}
\newcommand{\tauplus}{\ensuremath{\tau^+}}
\newcommand{\tauminus}{\ensuremath{\tau^-}}
\newcommand{\rhoI}{\ensuremath{\rho^+}}
\newcommand{\rhoII}{\ensuremath{\rho^-}}
\newcommand{\setS}{\ensuremath{\mathcal{S}}}
\newcommand{\ti}{\ensuremath{u}}
\newcommand{\tii}{\ensuremath{\ti^*}}
\newcommand{\tI}{\ensuremath{\ti_1}}
\newcommand{\tII}{\ensuremath{\ti_2}}
\newcommand{\setI}{\ensuremath{I}}
\newcommand{\setII}{\ensuremath{U}}
\newcommand{\setIIb}{\ensuremath{V}}
\newcommand{\rv}[1]{\ensuremath{\mathsf{\uppercase{#1}}}}
\newcommand{\rvt}[1]{\vt{\rv{#1}}} 
\newcommand{\sCref}[2]{{\Cref{#2} of~\cref{#1}}}
\newcommand{\scref}[2]{{\cref{#2} of~\cref{#1}}}
\DeclareDocumentCommand{\mutInf}{ O{\empty} m m}{
  \ensuremath{\mathrm{I}
    \ifthenelse{\equal{#1}{big}}{\big({#2; #3}\big)}{
      \ifthenelse{\equal{#1}{Big}}{\Big({#2; #3}\Big)}{
        \ifthenelse{\equal{#1}{bigg}}{\bigg({#2; #3}\bigg)}{
          \ifthenelse{\equal{#1}{Bigg}}{\Bigg({#2; #3}\Bigg)}{
            \ifthenelse{\equal{#1}{normal}}{({#2; #3})}{\left({#2; #3}\right)}
          }
        }
      }
    }
  }
}
\DeclareDocumentCommand{\binEnt}{ O{\empty} m}{
  \ensuremath{\mathrm{H}
    \ifthenelse{\equal{#1}{big}}{\big({#2}\big)}{
      \ifthenelse{\equal{#1}{Big}}{\Big({#2}\Big)}{
        \ifthenelse{\equal{#1}{bigg}}{\bigg({#2}\bigg)}{
          \ifthenelse{\equal{#1}{Bigg}}{\Bigg({#2}\Bigg)}{
            \ifthenelse{\equal{#1}{normal}}{({#2})}{\left(#2\right)}
          }
        }
      }
    }
  }
}
\DeclareDocumentCommand{\Prob}{ O{\empty} O{\empty} m}{
  \ensuremath{\mathrm{P}\ifthenelse{\equal{#1}{\empty}}{}{_{#1}}
    \ifthenelse{\equal{#2}{big}}{\big\{#3\big\}}{
      \ifthenelse{\equal{#2}{Big}}{\Big\{#3\Big\}}{
        \ifthenelse{\equal{#2}{bigg}}{\bigg\{#3\bigg\}}{
          \ifthenelse{\equal{#2}{Bigg}}{\Bigg\{#3\Bigg\}}{
            \ifthenelse{\equal{#2}{normal}}{\{#3\}}{\left\{#3\right\}}
          }
        }
      }
    }
  }
}
\DeclareDocumentCommand{\Exp}{ O{\empty} m}{
  \ensuremath{\mathrm{\mathds{E}}
    \ifthenelse{\equal{#1}{big}}{\big[ #2 \big]}{
      \ifthenelse{\equal{#1}{Big}}{\Big[ #2 \Big]}{
        \ifthenelse{\equal{#1}{bigg}}{\bigg[ #2 \bigg]}{
          \ifthenelse{\equal{#1}{Bigg}}{\Bigg[ #2 \Bigg]}{
            \ifthenelse{\equal{#1}{normal}}{[ #2 ]}{\left[ #2 \right]}
          }
        }
      }
    }
  }
}
\DeclareDocumentCommand{\condExp}{ O{\empty} m m}{
  \ensuremath{\mathds{E}
    \ifthenelse{\equal{#1}{big}}{\big[{#2\big| #3}\big]}{
      \ifthenelse{\equal{#1}{Big}}{\Big[{#2\Big| #3}\Big]}{
        \ifthenelse{\equal{#1}{bigg}}{\bigg[{#2\bigg| #3}\bigg]}{
          \ifthenelse{\equal{#1}{Bigg}}{\Bigg[{#2\Bigg| #3}\Bigg]}{
            \ifthenelse{\equal{#1}{normal}}{[{#2| #3}]}{\left[#2 \middle| #3\right]}
          }
        }
      }
    }
  }
}
\DeclareDocumentCommand{\card}{ O{\empty} m}{
  \ensuremath{
    \ifthenelse{\equal{#1}{big}}{\big\vert #2 \big\vert}{
      \ifthenelse{\equal{#1}{Big}}{\Big\vert #2 \Big\vert}{
        \ifthenelse{\equal{#1}{bigg}}{\bigg\vert #2 \bigg\vert}{
          \ifthenelse{\equal{#1}{Bigg}}{\Bigg\vert #2 \Bigg\vert}{
            \ifthenelse{\equal{#1}{normal}}{\vert #2 \vert}{\left\vert #2 \right\vert}
          }
        }
      }
    }
  }
}
\DeclareDocumentCommand{\Pcond}{ O{\empty} O{\empty} m m}{
  \ensuremath{\mathrm{P}
    \ifthenelse{\equal{#1}{\empty}}{}{_{#1|#2}}
    \ifthenelse{\equal{#3#4}{}}{}{\left\{{#3}{}\middle\vert{}{#4}\right\}}}
}
\DeclareDocumentCommand{\ent}{ O{\empty} m}{
  \ensuremath{
    \ensuremath{\mathrm{H}
      \ifthenelse{\equal{#1}{big}}{\big(#2\big)}{
        \ifthenelse{\equal{#1}{Big}}{\Big(#2\Big)}{
          \ifthenelse{\equal{#1}{bigg}}{\bigg(#2\bigg)}{
            \ifthenelse{\equal{#1}{Bigg}}{\Bigg(#2\Bigg)}{
              \ifthenelse{\equal{#1}{normal}}{(#2)}{\left(#2\right)}
            }
          }
        }
      }
    }
  }
}
\DeclareDocumentCommand{\condEnt}{ O{\empty} m m}{
  \ensuremath{
    \ensuremath{\mathrm{H}
      \ifthenelse{\equal{#1}{big}}{\big({#2\big| #3}\big)}{
        \ifthenelse{\equal{#1}{Big}}{\Big({#2\Big| #3}\Big)}{
          \ifthenelse{\equal{#1}{bigg}}{\bigg({#2\bigg| #3}\bigg)}{
            \ifthenelse{\equal{#1}{Bigg}}{\Bigg({#2\Bigg| #3}\Bigg)}{
              \ifthenelse{\equal{#1}{normal}}{({#2| #3})}{\left(#2 \middle| #3\right)}
            }
          }
        }
      }
    }
  }
}
\theoremstyle{plain}
\newtheorem{theorem}{Theorem}\newtheorem{lemma}{Lemma}\newtheorem{conjecture}{Conjecture}\newtheorem{proposition}{Proposition}
\theoremstyle{remark}
\crefname{enumi}{part}{parts}
\Crefname{enumi}{Part}{Parts}
\crefname{equation}{}{}
\Crefname{equation}{}{}
\crefname{figure}{figure}{figures}
\Crefname{figure}{Figure}{Figures}
\crefname{case}{case}{cases}
\Crefname{case}{Case}{Cases}
\pgfplotsset{sketch style/.append style={%
    axis x line=middle,%
    axis y line=middle,%
    xlabel={$\rho$},%
    ylabel={},%
    x label style={anchor=north},%
    legend style={%
      at={(0.02,0.98)},%
      anchor=north west},%
    width=7.3cm,%
    height=7.3cm,%
  }%
}
\definecolor{myred}{RGB}{153, 0, 0}
\definecolor{myblue}{RGB}{0, 0, 153}
\newcommand{\prm}{^\prime}
\newcommand{\pprm}{^{\prime\prime}}
\begin{document}

\begin{frontmatter}
\title{Dictator Functions Maximize Mutual Information}
\runtitle{Dictator Functions Maximize Mutual Information}

\begin{aug}
\author{\fnms{Georg} \snm{Pichler}%
  \thanksref{t1}   \ead[label=e1]{georg.pichler@gmail.com}},
\author{\fnms{Pablo} \snm{Piantanida}\ead[label=e2]{pablo.piantanida@centralesupelec.fr}}
\and
\author{\fnms{Gerald} \snm{Matz}%
  \thanksref{t1}   \ead[label=e3]{gerald.matz@nt.tuwien.ac.at}
}

\runauthor{G. Pichler et al.}

\affiliation{TU Wien and CentraleSup\'elec-CNRS-Universit\'e Paris-Sud}

\thankstext{t1}{Supported by WWTF Grants ICT12-054 and ICT15-119.} 

\address{Institute of Telecommunications \\
Technische Universität Wien \\
Gusshausstraße 25 / E389 \\
1040 Vienna, Austria \\
\printead{e1}\\
\phantom{E-mail:\ }\printead*{e3}}

\address{CentraleSup\'elec-CNRS-Universit\'e Paris-Sud \\
3 rue Joliot-Curie \\
F-91192 Gif-sur-Yvette Cedex, France \\
\printead{e2}}
\end{aug}

\begin{abstract}
Let $(\rvt X, \rvt Y)$ denote $n$ independent, identically distributed copies of 
two arbitrarily correlated Rademacher random variables $(\rv x, \rv y)$.
We prove that the inequality $\mutInf{\fOrig(\rvt X)}{\gOrig(\rvt Y)} \le \mutInf{\rv x}{\rv y}$ holds
for any two Boolean functions: $\fOrig,\gOrig \colon \{-1,1\}^n \to \{-1,1\}$ ($\mutInf{\,\cdot\,}{\cdot}$ denotes mutual information).
We further show that equality in general is achieved only by the dictator 
functions $\fOrig(\vt x)=\pm \gOrig(\vt x)=\pm x_i$, $i \in \Nto{n}$.
\end{abstract}

\begin{keyword}[class=MSC]
\kwd[Primary ]{94A15}
\kwd[; secondary ]{94C10}
\end{keyword}

\begin{keyword}
\kwd{Boolean functions}
\kwd{mutual information}
\kwd{Fourier analysis}
\kwd{binary sequences}
\kwd{binary codes}
\end{keyword}

\end{frontmatter}

\section{Introduction and Main Results}

Let
$(\rv x, \rv y)$ be two dependent Rademacher random variables on $\{-1,1\}$, with correlation coefficient
$\rho \defas \Exp{\rv x\rv y} \in [-1,1]$.
For given $n \in \NN$,
let $(\rvt x, \rvt y) = (\rv x, \rv y)^n$ be $n$ independent, identically distributed copies of $(\rv x, \rv y)$.
We will use the notation from \cite{Cover2006Elements} for information-theoretic quantities.
In particular, $\Exp{\rv x}$, $\ent{\rv x}$, and $\mutInf{\rv x}{\rv y}$ denote expectation, entropy, and mutual information, respectively.
Motivated by problems in computational biology \cite{Klotz2014Canalizing}, Kumar and Courtade formulated the following conjecture \cite[Conjecture~1]{Kumar2013Which}.
\begin{conjecture}
  \label{conj:full_conjecture}
  For any Boolean function $\fOrig \colon \{-1,1\}^n \to \{-1,1\}$,
  \begin{align}
    \label{eq:full_conjecture}
    \mutInf[big]{\fOrig(\rvt x)}{\rvt y} \le \mutInf{\rv x}{\rv y} .
  \end{align}
\end{conjecture}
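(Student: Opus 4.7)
The plan is to use Fourier analysis on the discrete cube $\Base^n$. Writing $\fOrig = \sum_{S \subseteq \Ntoo{n}} \fFourier{S} \bvt{S}$ with $\bvt{S}(\vt x) = \prod_{i \in S} x_i$, the product structure of $(\rvt x, \rvt y)$ together with $\Exp{\rv x \rv y} = \rho$ give the noise-operator identity
\begin{align*}
  \condExp{\fOrig(\rvt x)}{\rvt y = \vt y} = \sum_{S} \rho^{|S|} \fFourier{S}\,\bvt{S}(\vt y) =: \Tr_\rho \fOrig(\vt y).
\end{align*}
Because $\fOrig$ is $\Base$-valued, the conditional distribution of $\fOrig(\rvt x)$ given $\rvt y$ is Bernoulli with parameter $(1 + \Tr_\rho \fOrig(\rvt y))/2$, so with $a \defas \fFourier{\emptyset} = \Exp{\fOrig}$ I obtain $\mutInf{\fOrig(\rvt x)}{\rvt y} = \binEnt{(1+a)/2} - \Exp{\binEnt{(1 + \Tr_\rho \fOrig(\rvt y))/2}}$. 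For the dictator $\fOrig(\vt x) = x_1$ this evaluates to $1 - \binEnt{(1+\rho)/2} = \mutInf{\rv x}{\rv y}$, matching the claimed upper bound.

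The inequality thus reduces to bounding the Jensen concavity gap of $\binEnt{\,\cdot\,}$ applied to $\rv Z \defas \Tr_\rho \fOrig(\rvt y)$. Parseval gives $\sum_{S \neq \emptyset} \fFourier{S}^2 = 1 - a^2$, so using $|\rho|^{|S|} \le |\rho|$ for $|S|\ge 1$ we get $\Exp{\rv Z^2} \le a^2 + \rho^2(1-a^2)$; moreover $|\rv Z| \le 1$ since $\rv Z$ is a conditional expectation of a $\Base$-valued random variable. The next step is to show that over all laws on $[-1,1]$ with mean $a$, with this second-moment bound and with the support constraint, the Jensen gap is maximised by the symmetric two-atom law $\tfrac{1}{2}(\delta_{-\rho}+\delta_\rho)$ at $a = 0$, giving exactly $1 - \binEnt{(1+\rho)/2}$ and thereby the conjectured inequality. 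Equality analysis would then trace back tightness in Parseval and in the bound $|\rho|^{|S|} \le |\rho|$ to force all Fourier weight of $\fOrig$ onto level $|S|=1$, which together with the Boolean constraint forces $\fOrig = \pm x_i$.

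The main obstacle is that the Parseval-based moment bound is strictly slack whenever $\fOrig$ carries any Fourier mass on $|S| \ge 2$, yet the relaxed moment-plus-support problem by itself admits atomic laws with a strictly larger Jensen gap than the dictator; so the Boolean structure has to be exploited more finely than through second moments alone. A promising route is to first prove the pair variant $\mutInf{\fOrig(\rvt x)}{\gOrig(\rvt y)} \le \mutInf{\rv x}{\rv y}$ for Boolean $\gOrig$, where the joint law of $(\fOrig(\rvt x),\gOrig(\rvt y))$ becomes a three-parameter family in $(\Exp{\fOrig},\Exp{\gOrig},\Exp{\fOrig(\rvt x)\gOrig(\rvt y)})$ constrained by a Cauchy--Schwarz bound $|\Exp{\fOrig(\rvt x)\gOrig(\rvt y)} - \Exp{\fOrig}\Exp{\gOrig}| \le |\rho|\sqrt{(1-\Exp{\fOrig}^2)(1-\Exp{\gOrig}^2)}$, and mutual information can be optimised directly over this compact region and shown to attain its maximum at $(0,0,\pm\rho)$. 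Bridging from this two-function bound back to the single-function conjecture is the step I expect to be genuinely hard, since $\gOrig$ is restricted to Boolean outputs and one loses information about $\rvt y$ through any such post-processing.
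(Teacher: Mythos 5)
There is a fundamental problem here: the statement you are asked to prove is \cref{conj:full_conjecture}, the Kumar--Courtade conjecture, which the paper does \emph{not} prove --- it is stated as an open conjecture, and the paper's actual contribution is the strictly weaker two-function version, \cref{thm:main}. Your proposal, to its credit, diagnoses correctly why the natural attack fails: after writing $\mutInf{\fOrig(\rvt x)}{\rvt y}$ as a Jensen gap of $\binEnt{\cdot}$ applied to $\Tr_\rho\fOrig(\rvt y)$, the only handles you extract from Booleanness are the mean, the support constraint $|\Tr_\rho\fOrig|\le 1$, and the second-moment bound from Parseval, and you correctly observe that the extremal law for this relaxed moment problem beats the dictator. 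That is exactly the known obstruction, and your proposal does not overcome it; it pivots instead to the two-function inequality. But the final ``bridging'' step runs the data-processing inequality backwards: $\mutInf{\fOrig(\rvt x)}{\gOrig(\rvt y)}\le\mutInf{\fOrig(\rvt x)}{\rvt y}$ holds for every Boolean $\gOrig$, so the two-function bound is a \emph{consequence} of \cref{eq:full_conjecture}, and no supremum over Boolean $\gOrig$ can recover $\mutInf{\fOrig(\rvt x)}{\rvt y}$ (one genuinely loses information by quantizing $\rvt y$ to one bit). So the proposal is not a proof of the stated conjecture, and you yourself flag this step as the one you cannot do.

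Two further remarks on the part of your plan that corresponds to what the paper actually proves. First, your proposed constraint region for $\bigl(\Exp{\fOrig},\Exp{\gOrig},\Exp{\fOrig(\rvt x)\gOrig(\rvt y)}\bigr)$ uses only the single Cauchy--Schwarz bound $|\Exp{\fOrig(\rvt x)\gOrig(\rvt y)}-\Exp{\fOrig}\Exp{\gOrig}|\le|\rho|\sqrt{(1-\Exp{\fOrig}^2)(1-\Exp{\gOrig}^2)}$; the paper's proof needs a finer region, obtained by splitting $\mytheta{1}=\tauplus+\tauminus$ into positive and negative parts and intersecting the Cauchy--Schwarz bound with the non-negativity constraints \cref{eq:theta_rho_bound1} on the joint probabilities --- the authors single this splitting out as the key idea, which strongly suggests the coarser region admits laws exceeding $1-\binEnt{\frac{1+\rho}{2}}$. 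Second, even granting the correct region, ``optimised directly over this compact region'' hides the real work: the paper reduces to the one-variable inequality of \cref{lem:main} and establishes it by a delicate convexity/root-counting analysis. In short: the honest conclusion is that your proposal establishes (a sketch of) \cref{thm:main} at best, and the conjecture itself remains open both in the paper and in your argument.
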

\noindent
This claim -- while seemingly innocent at first sight -- has received significant interest and resisted several efforts to find a proof
(see the discussion in \cite[Section~IV]{Courtade2014Which}).
Note that $\fOrig=\bvt{i}$ for any dictator function~\cite[Definition 2.3]{ODonnell2014Analysis} $\bvt{i}(\vt x) \defas x_i$, $i \in \Nto{n}$ achieves equality in \cref{eq:full_conjecture}.

We next state the main result of this paper, which is a relaxed version of \cref{conj:full_conjecture}, involving two Boolean functions.

\begin{theorem}
  \label{thm:main}
  For any two Boolean functions $\fOrig,\gOrig \colon \{-1,1\}^n \to \{-1,1\}$,
  \begin{align}
    \label{eq:thm:main}
    \mutInf[big]{\fOrig(\rvt x)}{\gOrig(\rvt y)} \le \mutInf{\rv x}{\rv y} .
  \end{align}
\end{theorem}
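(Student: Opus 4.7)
The plan is to combine Fourier--Walsh analysis on the Boolean cube with a careful analysis of binary mutual information, reducing the inequality to a three-parameter extremal problem that can then be attacked using the Boolean structure of \(\fOrig\) and \(\gOrig\). Expanding \(\fOrig = \sum_{S} \fFourier{S} \bvt{S}\) and \(\gOrig = \sum_{S} \gFourier{S} \bvt{S}\) in the Walsh basis and using that the coordinates of \((\rvt x, \rvt y)\) are i.i.d.\ copies of \((\rv x, \rv y)\) with \(\Exp{\rv x \rv y} = \rho\), one has \(\Exp{\bvt{S}(\rvt x) \bvt{T}(\rvt y)} = \delta_{ST}\rho^{\card{S}}\), and hence the central identity
\begin{align*}
  \Exp{\fOrig(\rvt{x}) \gOrig(\rvt{y})}
  = \sum_{S \subseteq \Ntoo{n}} \fFourier{S} \gFourier{S} \rho^{\card{S}}.
\end{align*}
Since the joint law of the binary pair \((\fOrig(\rvt x), \gOrig(\rvt y))\) is determined entirely by the three scalars \(\alpha \defas \Exp{\fOrig(\rvt x)}\), \(\beta \defas \Exp{\gOrig(\rvt y)}\), and \(r \defas \Exp{\fOrig(\rvt x) \gOrig(\rvt y)}\), one has \(\mutInf{\fOrig(\rvt x)}{\gOrig(\rvt y)} = M(\alpha,\beta,r)\) for an explicit function \(M\) of three real arguments, and the theorem reduces to \(M(\alpha,\beta,r) \le M(0, 0, \rho) = \mutInf{\rv x}{\rv y}\) for every triple realizable by two Boolean \(\fOrig, \gOrig\).

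Next I would assemble two basic ingredients. Convexity of the Kullback--Leibler divergence shows that at fixed marginals \((\alpha, \beta)\), the map \(r \mapsto M(\alpha,\beta,r)\) is monotone in \(\card{r - \alpha\beta}\). Cauchy--Schwarz combined with Parseval (\(\sum_S \fFourier{S}^2 = 1\)) and the coordinate-wise contraction \(\card{\rho}^{\card{S}} \le \card{\rho}\) for \(\card{S} \ge 1\) applied to the Fourier identity above yields the Pearson-type bound
\begin{align*}
  \card{r - \alpha\beta} \le \card{\rho}\sqrt{(1-\alpha^2)(1-\beta^2)}.
\end{align*}

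The main obstacle is that these two ingredients are not sufficient on their own. For biased marginals the \emph{virtual} binary distribution saturating the Pearson bound already has mutual information strictly exceeding \(\mutInf{\rv x}{\rv y}\), even though no such distribution can actually arise from a pair of Boolean functions. The crux of the argument must therefore sharpen the Fourier estimate by exploiting that \(\fOrig\) and \(\gOrig\) are genuinely \(\Base\)-valued. A natural starting point is the level-\(1\) inequality \(\card{\fFourier{\{i\}}} \le 1 - \card{\alpha}\), valid for every Boolean \(\fOrig\) with \(\Exp{\fOrig} = \alpha\), together with an analogous control of higher-degree Fourier mass; the target is a joint constraint on \((\alpha, \beta, r)\) that is tight precisely at the dictator configuration \(\alpha = \beta = 0\), \(r = \pm \rho\) and that, via the monotonicity of \(M\) in \(\card{r - \alpha\beta}\), forces \(M(\alpha,\beta,r) \le M(0,0,\rho)\) uniformly.

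For the equality statement every inequality used above must saturate. Cauchy--Schwarz then forces \(\gFourier{S} = \pm \fFourier{S}\) on the support of \(\fFourier{S}\gFourier{S}\); the estimate \(\card{\rho}^{\card{S}} \le \card{\rho}\) forces \(\card{S} = 1\) whenever \(0 < \card{\rho} < 1\); and a degree-\(1\) polynomial on \(\Base^n\) taking values in \(\Base\) is either \(\pm \bvt{i}\) or a constant, the constant case giving \(\mutInf{\fOrig(\rvt x)}{\gOrig(\rvt y)} = 0\) and hence non-extremal. Consequently the only equality cases are \(\fOrig(\vt x) = \pm \bvt{i}(\vt x) = \pm \gOrig(\vt x)\) for some \(i \in \Nto{n}\), matching the equality claim of the theorem.
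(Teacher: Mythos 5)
There is a genuine gap, and you have in fact located it yourself: the step you describe as ``the crux of the argument'' is announced as a target rather than carried out. Your setup (Fourier--Walsh expansion, reduction of $\mutInf[big]{\fOrig(\rvt x)}{\gOrig(\rvt y)}$ to a function of the three scalars $(\alpha,\beta,r)$, convexity in $r$ at fixed marginals, and the Cauchy--Schwarz/Parseval bound $\card{r-\alpha\beta}\le\card{\rho}\sqrt{(1-\alpha^2)(1-\beta^2)}$) coincides with the paper's, and your observation that these ingredients alone fail for biased marginals is correct. But the sharpening you then reach for -- the level-$1$ inequality $\card{\fFourier{\{i\}}}\le 1-\card{\alpha}$ plus unspecified ``control of higher-degree Fourier mass'' -- is not the one that closes the argument, and you give no derivation of the promised joint constraint on $(\alpha,\beta,r)$. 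The paper's actual sharpening is more elementary and of a different nature: writing $\thetarho=\frac14\sum_{\card{\setS}\ge1}\fFourier{\setS}\gFourier{\setS}\rho^{\card{\setS}}$, it splits $\mytheta{1}=\tauplus+\tauminus$ into its positive and negative parts, uses $\rho\tauminus\le\thetarho\le\rho\tauplus$, and crucially imports the constraint $-\oai\obi\le\mytheta{1}\le\ai\obi$ coming from the non-negativity of the joint probabilities of $\bigl(\fOrig(\rvt x),\gOrig(\rvt x)\bigr)$ at $\rho=1$; combined with Cauchy--Schwarz on $\tauplus-\tauminus$ this yields $\thetarho\le\rho\,\frac{\ai\obi+\sqrt{\ai\oai\bi\obi}}{2}$ (and the analogous lower bound), which is strictly stronger than the Pearson bound precisely in the biased regime $\ai<\bi$ where you note the Pearson bound fails.

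Even granting a correct sharpened constraint, the remaining extremal problem is not the routine verification your last sentence suggests. In the paper it becomes \cref{lem:main}, the inequality $\FktI(\rho,\aii,\bii)\ge0$, whose proof occupies most of the argument: it requires showing that the second derivative $\FktIb\pprm(\rho)$ changes sign exactly once on $(0,\rhomax)$ via a degree count on the polynomial $p(\rho)$, and then a separate change of variables \cref{eq:xy} and a further convexity argument to handle the boundary value $\FktI(\rhomax,\aii,\bii)$. None of this is addressed in your proposal. Your equality analysis is likewise premised on the missing constraint being ``tight precisely at the dictator configuration,'' which is exactly what has not been shown; note also that the paper's equality argument needs an additional hypercontractivity-type step, $\pair[big]{\Tr_{\sqrt\rho}\fOrig}{\Tr_{\sqrt\rho}\gOrig}^2\le\pair{\fOrig}{\Tr_\rho\fOrig}\pair{\gOrig}{\Tr_\rho\gOrig}$, to pin down the dictators, rather than only the degree observation you invoke. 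In short: the skeleton is right, but the two load-bearing pieces -- the correct sharpened bound on $\thetarho$ and the proof of the resulting scalar inequality -- are absent.
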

\noindent
If \cref{eq:full_conjecture} were true, this statement would readily follow from the data processing inequality \cite[Theorem~2.8.1]{Cover2006Elements}.
\Cref{thm:main} was stated as an open problem in \cite{Courtade2014Which} and \cite[Section~IV]{Kumar2013Which}, and separately investigated in \cite{Anantharam2013hypercontractivity}.
A proof of \eqref{eq:thm:main} was previously available only under the additional restrictive assumptions that $\fOrig$ and $\gOrig$ are equally biased (i.e., $\Exp{\fOrig(\rvt x)} = \Exp{\gOrig(\rvt x)}$) and satisfy the condition
\begin{align}
  \Prob{\fOrig(\rvt x) = 1, \gOrig(\rvt x) = 1} \ge \Prob{\fOrig(\rvt x) = 1} \Prob{\gOrig(\rvt x) = 1} . \label{eq:correlation_condition}
\end{align}
The reader is invited to see \cite[Section~IV]{Courtade2014Which} for further details. 
In this paper, we use Fourier-analytic tools to prove \cref{thm:main} without any additional restrictions on $\fOrig$ and $\gOrig$. 
We suitably bound the Fourier coefficients of $\fOrig$ and $\gOrig$, and thereby reduce \cref{eq:thm:main} to an elementary inequality, which is subsequently established.

A careful inspection of the proof of \cref{thm:main} reveals that in general, up to sign changes, the 
dictator functions $\bvt{i}$, $i \in \Nto{n}$ are the unique maximizers of $\mutInf[big]{\fOrig(\rvt x)}{\gOrig(\rvt y)}$.
\begin{proposition}
  \label{pro:uniqueness}
  If $0<|\rho|<1$, equality in \cref{eq:thm:main} is achieved if and only if
  $\fOrig=\pm \gOrig = \pm\bvt{i}$ for some $i \in \Nto{n}$.
\end{proposition}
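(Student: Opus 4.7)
My plan is to retrace the chain of inequalities used in the proof of \cref{thm:main} and extract its equality conditions one step at a time. That proof expands the cross-correlation via Fourier analysis,
\begin{align*}
  \Exp{\fOrig(\rvt x)\gOrig(\rvt y)} = \sum_{S \subseteq \Nto{n}} \rho^{|S|}\fFourier{S}\gFourier{S},
\end{align*}
bounds this quantity in terms of the biases $\fFourier{\nil}=\Exp{\fOrig(\rvt x)}$ and $\gFourier{\nil}=\Exp{\gOrig(\rvt y)}$ together with Parseval's identity $\sum_{S}\fFourier{S}^2=\sum_{S}\gFourier{S}^2=1$, and then invokes an elementary inequality to convert this three-moment summary into the desired mutual-information bound. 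I would verify equality step by step along this chain under the hypothesis $0<|\rho|<1$.

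Since $|\rho|^k<|\rho|$ strictly for every $k\ge 2$ when $|\rho|<1$, the Cauchy--Schwarz/Parseval-type estimate that suppresses higher Fourier levels is strict unless $\fFourier{S}=\gFourier{S}=0$ for every $|S|\ge 2$. Equality in \cref{eq:thm:main} therefore forces both $\fOrig$ and $\gOrig$ to be affine on $\Base^n$. But any $\Base$-valued affine function on the hypercube is necessarily either a constant $\pm 1$ or a signed dictator $\pm\bvt{i}$: writing $\fOrig(\vt x)=\fFourier{\nil}+\sum_i\fFourier{\{i\}}x_i$ and inspecting values on $\Base^n$ forces each $\fFourier{\{i\}}\in\{-1,0,1\}$ with at most one of them nonzero, and $\fFourier{\nil}=0$ whenever some $\fFourier{\{i\}}\ne 0$. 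A constant function would give $\mutInf[big]{\fOrig(\rvt x)}{\gOrig(\rvt y)}=0<\mutInf{\rv x}{\rv y}$ when $\rho\neq 0$, so necessarily $\fOrig=\epsilon_1\bvt{i}$ and $\gOrig=\epsilon_2\bvt{j}$ with $\epsilon_1,\epsilon_2\in\Base$ and $i,j\in\Nto{n}$.

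To identify the indices, note that $(\fOrig(\rvt x),\gOrig(\rvt y))=(\epsilon_1\rv{x}_i,\epsilon_2\rv{y}_j)$ has uniform $\Base$-valued marginals with cross-correlation $\epsilon_1\epsilon_2\rho$ if $i=j$ and $0$ otherwise. In the latter case the two random variables are independent and the left-hand side of \cref{eq:thm:main} vanishes, contradicting equality. In the former case, $\mutInf[big]{\epsilon_1\rv{x}_i}{\epsilon_2\rv{y}_i}$ depends only on $|\epsilon_1\epsilon_2\rho|=|\rho|$, so all four sign combinations yield equality. This is precisely the statement $\fOrig=\pm\gOrig=\pm\bvt{i}$ for some $i\in\Nto{n}$.

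The main obstacle I anticipate lies in the first step above, namely certifying strictness of the Fourier bound whenever any higher-order coefficient of $\fOrig$ or $\gOrig$ is nonzero; depending on the exact form of the elementary inequality used in the proof of \cref{thm:main}, an additional argument in the spirit of Friedgut--Kalai--Naor may also be needed to rule out level-$1$ Fourier mass being spread across more than one coordinate. Both issues should be resolvable by inspecting precisely which Cauchy--Schwarz (or Jensen-type) step in the proof of \cref{thm:main} is invoked, since each such step carries a standard equality criterion that translates directly into the stated structural constraint on $\fOrig$ and $\gOrig$.
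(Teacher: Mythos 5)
Your plan correctly identifies the overall strategy (trace the equality cases through the chain of inequalities), and your endgame is fine: a $\Base$-valued affine function on the hypercube is a constant or a signed dictator, constants and mismatched indices are excluded because they give zero mutual information, and all four sign combinations achieve equality. The genuine gap is the first and central step. You assert that equality forces $\fFourier{\setS}=\gFourier{\setS}=0$ for all $\card{\setS}\ge 2$ because ``$|\rho|^k<|\rho|$ makes the estimate suppressing higher levels strict,'' but this is not how the chain of inequalities in the proof of \cref{thm:main} is organized, and the assertion does not follow from it. That proof bounds $\thetarho$ between $\thetaminus$ and $\thetaplus$, uses convexity of $\FktIV$ in $\mytheta{}$ to reduce to the endpoints, and then applies \cref{lem:main}; equality overall therefore first requires $\FktI(\rhoI,\ai,\bi)=0$ or $\FktI(\rhoII,\oai,\bi)=0$. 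Since \cref{lem:main} is strict for $0<\aii<\bii<1$ and $\rho>0$, this forces $\ai=\bi$, and one then needs a \emph{separate} argument (not covered by \cref{lem:main}, which assumes $\aii<\bii$) to show that $\FktI(\rho,\ai,\ai)>0$ unless $\ai=\tfrac12$; the paper does this by computing $\frac{\partial^2 \FktI}{\partial \rho^2}(\rho,\ai,\ai)>0$ and invoking \cref{lem:taylor}. Your sketch never derives that $\fOrig$ and $\gOrig$ must be unbiased, and without that you cannot begin to localize the Fourier mass.

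Only after $\ai=\bi=\tfrac12$ is established does strict convexity of $\FktIV(\wc,\tfrac12,\tfrac12)$ force $\thetarho=\pm\tfrac{\rho}{4}$, i.e.\ $|\pair{\fOrig}{\Tr_\rho\gOrig}|=\rho$, and the conclusion that the spectrum sits on level $1$ of a single coordinate comes from $\rho^2=\pair[big]{\Tr_{\sqrt{\rho}}\fOrig}{\Tr_{\sqrt{\rho}}\gOrig}^2\le\pair{\fOrig}{\Tr_\rho\fOrig}\pair{\gOrig}{\Tr_\rho\gOrig}\le\rho^2$ together with the equality case of the stability bound \cite[Proposition~2.50]{ODonnell2014Analysis}. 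This single step simultaneously kills the higher levels \emph{and} the spreading of level-$1$ mass over several coordinates, so no FKN-type robustness is needed---but it is exactly the ingredient your proposal defers to ``inspecting which Cauchy--Schwarz step is invoked.'' As written, the hard content of the proposition (unbiasedness first, then dictatorship via the stability bound) is asserted rather than proved.
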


\section{Proof of \texorpdfstring{\cref{thm:main}}{Theorem \ref{thm:main}}}
\label{sec:main-result}

Define $\Ntoo{n} \defas \Nto{n}$ and let $f$, $g$ be two Boolean functions on the Boolean hypercube, i.e., $\fOrig,\gOrig \colon \Base^n \to \Base$. Denote their Fourier expansions (cf.~\cite[(1.6)]{ODonnell2014Analysis}) $\fOrig(\vt x) = \sum_{\setS \subseteq \Ntoo{n}} \fFourier{\setS} \bvt{\setS}(\vt x)$ and $\gOrig(\vt x) = \sum_{\setS \subseteq \Ntoo{n}} \gFourier{\setS} \bvt{\setS}(\vt x)$, using the basis $\bvt{\setS}(\vt x) \defas \prod_{i \in \setS} x_i$ for $\setS \subseteq \Ntoo{n}$.
Define $\ai \defas \frac{1 + \fFourier{\varnothing}}{2} = \Prob{\fOrig(\rvt x) = 1}$, $\bi \defas \frac{1 + \gFourier{\varnothing}}{2} = \Prob{\gOrig(\rvt x) = 1}$ and $\thetarho \defas \frac 14 \sum_{\setS : \card{\setS} \ge 1} \fFourier{\setS} \gFourier{\setS} \rho^{\card{\setS}}$.
Without loss of generality, we may assume $\frac 12 \le \ai \le \bi \le 1$ and $\rho \in [0,1]$, as mutual information is symmetric and we have, with $\rvt y^* \defas \sgn(\rho)\rvt y$,
\begin{align}
  \mutInf[big]{\fOrig(\rvt x)}{\gOrig(\rvt y)} = \mutInf[big]{\sgn(\fFourier{\varnothing})\fOrig(\rvt x)}{\sgn(\gFourier{\varnothing})\gOrig(\sgn(\rho)\rvt y^*)} . \label{eq:symmetries}
\end{align}
In analogy to \cite[Proposition~1.9]{ODonnell2014Analysis}, the inner product satisfies
\begin{align}
  \pair{\fOrig}{\Tr_\rho \gOrig} &= \Exp{\fOrig(\rvt x)\gOrig(\rvt y)} = \fFourier{\varnothing} \gFourier{\varnothing} + 4 \thetarho = 1 - 2\Prob{\fOrig(\rvt x) \neq \gOrig(\rvt y)} , \label{eq:pair_prob}
\end{align}
where $\Tr_\rho$ is the noise operator \cite[Definition 2.46]{ODonnell2014Analysis}.
Defining $\ot \defas 1-t$ for a generic $t$, we can express the probabilities
\begin{align}
  \Prob{\fOrig(\rvt x) = 1, \gOrig(\rvt y) = -1} &= \ai\obi - \thetarho, \quad \Prob{\fOrig(\rvt x) = \gOrig(\rvt y) = 1} = \ai\bi + \thetarho ,\label{eq:joint_prob1}\\
  \Prob{\fOrig(\rvt x) = -1, \gOrig(\rvt y) = 1} &= \oai\bi - \thetarho, \; \Prob{\fOrig(\rvt x) = \gOrig(\rvt y) = -1} = \oai\obi + \thetarho . \label{eq:joint_prob2}
\end{align}
Using \cref{eq:joint_prob1}, \cref{eq:joint_prob2} and fundamental properties of mutual information~\cite[Section~2.4]{Cover2006Elements}, we obtain $\mutInf[big]{\fOrig(\rvt x)}{\gOrig(\rvt y)} = \FktIV(\thetarho,\ai,\bi)$ with
\begin{align}
  \FktIV(\mytheta{},\ai,\bi) \defas&\, \binEnt{\ai} + \binEnt{\bi} - \ent{\ai\bi + \mytheta{}, \ai\obi - \mytheta{}, \oai \bi - \mytheta{}, \oai \obi + \mytheta{}} ,
\end{align}
where, slightly abusing notation, we defined the binary entropy function $\binEnt{p} \defas \ent{p,\op}$ and $\ent[big]{(p_i)_{i \in \III}} \defas -\sum_{i \in \III} p_i \log_2 p_i$ for $\card{\III} > 1$.
By the non-negativity of probabilities \cref{eq:joint_prob1,eq:joint_prob2}, for any $\rho \in [0,1]$,
\begin{align}
  \label{eq:theta_rho_bound1}
  -\oai\obi \le \thetarho \le \ai\obi .
\end{align}
With $\PPP \defas \{\setS \subseteq \Ntoo{n} : \fFourier{\setS} \gFourier{\setS} > 0\} \setminus \{\varnothing\}$ and $\NNN \defas \{\setS \subseteq \Ntoo{n} : \fFourier{\setS} \gFourier{\setS} < 0\}$, we define
\begin{align}
  \tauplus &\defas \frac 14 \sum_{\setS \in \PPP} \fFourier{\setS} \gFourier{\setS} ,  &
  \tauminus &\defas \frac 14 \sum_{\setS \in \NNN} \fFourier{\setS} \gFourier{\setS}  \label{eq:thetas}
\end{align}
and apply the Schwarz inequality to show
\begin{align}
  \tauplus - \tauminus &= \frac 14 \sum_{\setS:\card{\setS} \ge 1} |\fFourier{\setS}| |\gFourier{\setS}| \\
                      &\le \frac 14 \sqrt{(1-\fFourier{\varnothing}^2)(1-\gFourier{\varnothing}^2)}                       = \sqrt{\ai \oai \bi \obi} . \label{eq:tau_diff_bound}
\end{align}
As $\mytheta{1} = \tauplus + \tauminus$, we combine \cref{eq:theta_rho_bound1,eq:tau_diff_bound} to obtain
\begin{align}
  \tauplus &\le \frac{\ai\obi + \sqrt{\ai \oai \bi \obi}}{2} , &
  \tauminus &\ge -\frac{\oai\obi + \sqrt{\ai \oai \bi \obi}}{2} .
\end{align}
By definition, $\rho\tauminus \le \thetarho \le \rho\tauplus$ and hence, $\thetarho \in [\thetaminus, \thetaplus]$, where
\begin{align}
  \thetaminus &\defas \max\left\{-\oai\obi, -\rho\frac{\oai\obi + \sqrt{\ai \oai \bi \obi}}{2} \right\} , &
  \thetaplus &\defas \min\left\{\ai\obi, \rho\frac{\ai\obi + \sqrt{\ai \oai \bi \obi}}{2} \right\} .
\end{align}
The function $\FktIV(\theta,\aii,\bii)$ is convex in $\theta$ by the concavity of entropy~\cite[Theorem~2.7.3]{Cover2006Elements} and consequently, $\mutInf[big]{\fOrig(\rvt x)}{\gOrig(\rvt y)} \le \max_{\theta \in \{\thetaplus,\thetaminus\}}\FktIV(\theta,\ai,\bi)$. Thus, \cref{thm:main} can be proved by establishing $1-\binEnt{\frac{\rho + 1}{2}}-\FktIV(\mytheta{},\ai,\bi) \ge 0$ for $\mytheta{} \in \{\thetaplus,\thetaminus\}$. Furthermore, it suffices to consider $\frac 12 < \ai < \bi < 1$ by continuity of $\FktIV$.

Define $\CST[\ai,\bi] \defas \frac{\ai\obi + \sqrt{\ai \oai \bi \obi}}{2}$, $\rhoI \defas \min\left\{\rho, \frac{\ai\obi}{\CST[\ai,\bi]}\right\}$, $\rhoII \defas \min\left\{\rho, \frac{\oai\obi}{\CST[\oai,\bi]}\right\}$, and
\begin{align}
  \FktI(\rho,\ai,\bi) \defas 1 - \binEnt{\frac{\rho + 1}{2}} - \FktIV(\rho \CST[\ai,\bi],\ai,\bi) .
\end{align}
Note that
\begin{align}
  \FktI(\rhoI,\ai,\bi) &= 1 - \binEnt{\frac{\rhoI + 1}{2}} - \FktIV(\thetaplus,\ai,\bi) \\
                       &\le 1 - \binEnt{\frac{\rho + 1}{2}} - \FktIV(\thetaplus,\ai,\bi)
\end{align}
by the monotonicity of the binary entropy function and accordingly we also have $\FktI(\rhoII,\oai,\bi) \le 1 - \binEnt{\frac{\rho + 1}{2}} - \FktIV(\thetaminus,\ai,\bi)$. \Cref{thm:main} thus follows from the following \lcnamecref{lem:main}.

\begin{lemma}
  \label{lem:main}
  For $0 < \aii < \bii < 1$ and $\rho \in \left[0, \frac{\aii \obii}{\CST}\right]$, we have $\FktI(\rho,\aii,\bii) \ge 0$ with equality if and only if $\rho = 0$.
\end{lemma}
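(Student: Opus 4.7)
The plan is to analyze $\FktI(\rho, \aii, \bii)$ as a univariate function of $\rho$ with $\aii, \bii$ fixed: compute its value and low-order derivatives at $\rho = 0$ to obtain strict positivity in a right neighborhood of the origin, and then extend to the whole interval via a unimodality argument and a boundary check.

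I would first verify $\FktI(0, \aii, \bii) = 0$ directly: at $\rho = 0$ the joint distribution factorizes into the product of marginals, so $\FktIV(0, \aii, \bii) = 0$, while $\binEnt{1/2} = 1$. Differentiation yields $\partial_\theta \FktIV(\theta, \aii, \bii) = \log_2 \frac{(\aii\bii + \theta)(\oaii\obii + \theta)}{(\aii\obii - \theta)(\oaii\bii - \theta)}$, which vanishes at $\theta = 0$ because $\aii\bii \cdot \oaii\obii = \aii\obii \cdot \oaii\bii$; hence $\partial_\rho \FktI(0, \aii, \bii) = 0$. A second differentiation, using $\partial_\theta^2 \FktIV = \tfrac{1}{\ln 2}\sum_i 1/p_i$ with $p_1 = \aii\bii + \theta$, $p_2 = \aii\obii - \theta$, etc., gives
\[
  \partial_\rho^2 \FktI(0, \aii, \bii) = \frac{1}{\ln 2}\left(1 - \frac{\CST^2}{\aii\oaii\bii\obii}\right).
\]
The inequality $\CST < \sqrt{\aii\oaii\bii\obii}$ is equivalent, upon rearranging $2\CST - \aii\obii = \sqrt{\aii\oaii\bii\obii}$ and squaring, to $\aii < \bii$. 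Under the hypothesis this is strict, so $\partial_\rho^2 \FktI(0, \aii, \bii) > 0$, establishing $\FktI(\rho, \aii, \bii) > 0$ for $\rho > 0$ in a right neighborhood of the origin.

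The main obstacle is the global statement, as $\FktI$ is not convex on the entire interval: as $\rho \to (\aii\obii/\CST)^-$, the probability $p_2 = \aii\obii - \rho\CST$ vanishes, forcing $\partial_\theta^2 \FktIV \to +\infty$ and thus $\partial_\rho^2 \FktI \to -\infty$. Consequently $\FktI$ is convex near $0$ but becomes concave near the right endpoint. My approach is to establish unimodality of $\FktI$ on $[0, \aii\obii/\CST]$ — i.e., that $\partial_\rho \FktI$ has exactly one zero, transitioning from positive to negative — which reduces the problem to checking $\FktI \ge 0$ at the two endpoints. At $\rho = \aii\obii/\CST$ the probability $p_2$ collapses, and by the entropy chain rule one obtains the closed form $\FktIV(\aii\obii, \aii, \bii) = \binEnt{\aii} - \bii\binEnt{\aii/\bii}$; the inequality then reduces to a two-parameter statement in $(\aii, \bii)$ alone that can be tackled by calculus. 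I expect establishing unimodality, together with the subsequent verification of this boundary inequality, to be the technical heart of the argument, since neither follows directly from convexity or from a simple probabilistic interpretation.
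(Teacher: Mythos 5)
Your local analysis at the origin is correct: $\FktI(0,\aii,\bii)=0$, $\partial_\rho\FktI(0,\aii,\bii)=0$, and $\partial_\rho^2\FktI(0,\aii,\bii)=\frac{1}{\ln 2}\bigl(1-\frac{\CST^2}{\aii\oaii\bii\obii}\bigr)>0$, since $\aii<\bii$ gives $\aii\obii<\oaii\bii$ and hence $\CST<\sqrt{\aii\oaii\bii\obii}$. Your overall skeleton --- convex near $0$, concave near $\rhomax=\aii\obii/\CST$, reduce to a single sign change of the second derivative plus an endpoint check --- is exactly the structure of the paper's proof. The problem is that the two claims carrying all the weight are only announced, not proved. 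First, ``unimodality'' (equivalently, that $\partial_\rho^2\FktI(\cdot,\aii,\bii)$ changes sign exactly once on $(0,\rhomax)$) does not follow from knowing its sign near the two ends; a priori it could change sign three or five times, in which case the minimum need not sit at an endpoint. The paper closes this by writing $\FktI\pprm=p/q$ with $q>0$ on $[0,\rhomax)$ and $p$ a polynomial that is nominally of degree $5$ but whose degree-$4$ and degree-$5$ coefficients cancel, so $\deg p\le 3$; combined with $p(0)>0$, $p(\rhomax)<0$, and the exhibition of a negative root (at $-\rhomin$ or $-\rhommin$ according as $\rhomin\le 1$ or not), this forces exactly one root in $(0,\rhomax)$. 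Some such degree- and root-counting argument is indispensable and is absent from your proposal.

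Second, the endpoint inequality. Your closed form $\FktIV(\aii\obii,\aii,\bii)=\binEnt{\aii}-\bii\binEnt{\aii/\bii}$ is correct, but the resulting two-variable inequality $1-\binEnt{\frac{1+\rhomax}{2}}-\binEnt{\aii}+\bii\binEnt{\aii/\bii}>0$ is itself the other half of the work, not a routine calculus exercise: the paper needs the non-obvious change of variables $(\aii,\bii)\mapsto(\cc,\xx)$ of \cref{eq:xy}, under which the expression becomes strictly convex in $\cc$ with minimum at $\cc=\frac 12$ (via \cref{lem:basic_inequalities}), followed by a one-variable argument showing $\FktVII(\xx)=\FktIII(\frac 12,\xx)>0$ using that the only interior critical point is $\xx=\frac 23$ with $\FktVII(\frac 23)=\log_2\frac{27}{25}>0$. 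As written, your proposal correctly identifies where the difficulty lies but establishes neither of the two statements that constitute the lemma.
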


Before proving \cref{lem:main}, we note the following facts.

\begin{lemma}
  \label{lem:basic_inequalities}
  For $x \in (0,1)$, we have
  \begin{align}
    \frac{1}{x^{-1}-1} + \log(1-x) > 0 .  
  \end{align}
\end{lemma}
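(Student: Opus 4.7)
My plan is to reduce the inequality to the classical bound $\log u \le u - 1$ (strict whenever $u \neq 1$). The first step is an algebraic simplification of the first term: since $\frac{1}{x^{-1}-1} = \frac{x}{1-x}$, the claim is equivalent to showing
\[
\frac{x}{1-x} + \log(1-x) > 0
\]
for every $x \in (0,1)$.

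Next I would apply the change of variables $u \defas (1-x)^{-1}$, which is a bijection between $x \in (0,1)$ and $u \in (1,\infty)$ and satisfies $\frac{x}{1-x} = u - 1$ together with $\log(1-x) = -\log u$. With this substitution the inequality becomes
\[
u - 1 - \log u > 0 \qquad \text{for every } u > 1,
\]
which is immediate from the fact that the function $u \mapsto u - 1 - \log u$ vanishes at $u = 1$ and has derivative $1 - u^{-1}$, which is strictly positive on $(1,\infty)$.

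I do not anticipate any real obstacle here: the statement is a one-variable calculus exercise that collapses completely after the rewriting. A direct alternative with essentially the same content is to set $h(x) \defas \frac{x}{1-x} + \log(1-x)$ and observe $h(0) = 0$ while
\[
h'(x) = \frac{1}{(1-x)^2} - \frac{1}{1-x} = \frac{x}{(1-x)^2} > 0
\]
on $(0,1)$, so strict monotonicity of $h$ yields $h(x) > 0$ throughout the open interval.
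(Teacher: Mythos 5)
Your proof is correct, but it takes a different route from the paper's. The paper proves the same reduced inequality $\frac{x}{1-x} > -\log(1-x)$ by a termwise comparison of power series: it expands $-\log(1-x) = \sum_{n\ge 1} x^n/n$ and bounds this strictly by the geometric series $\sum_{n\ge 1} x^n = \frac{x}{1-x}$, which is a one-line argument requiring no differentiation. You instead substitute $u = (1-x)^{-1}$ to reduce to the classical bound $u - 1 - \log u > 0$ for $u > 1$ (or, equivalently, verify directly that $h(x) = \frac{x}{1-x} + \log(1-x)$ vanishes at $0$ and has positive derivative $\frac{x}{(1-x)^2}$ on $(0,1)$). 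Both arguments are complete and equally elementary; the series comparison is slightly more self-contained in that it avoids any appeal to monotonicity or to the standard logarithm inequality, while your derivative computation is perhaps the more routine calculus reflex and generalizes more readily if one ever needs quantitative lower bounds on $h$ rather than mere positivity. There is no gap in either of your two variants.
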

\begin{proof}
  Using Taylor series expansion, we immediately obtain
  \begin{align}
    -\log(1-x) = \sum_{n=1}^\infty \frac{x^n}{n} < \sum_{n=1}^\infty x^n = \frac{x}{1-x} .
  \end{align}
                \end{proof}

The following \lcnamecref{lem:taylor} collects elementary facts about convex/concave functions and follows from elementary properties of convex functions on the real line (see, e.g., \cite[Chapter~I]{Roberts1974Convex}).
\begin{lemma}
  \label{lem:taylor}
  Let $\FktLemII \colon \setII \to \RR$ be a continuous function, defined on the compact interval $\setII \defas [\tI,\tII] \subset \RR$. Assuming that $f$ is twice differentiable on $\setIIb$, where $(\tI, \tII) \subseteq \setIIb \subseteq \setII$, the following properties hold.
  \begin{enumerate}
  \item \label{itm:taylor:positive}
    If $\FktLemII\pprm(\ti)\ge 0$ for all $\ti\in (\tI,\tII)$  
    and $\FktLemII\prm(\tii) = 0$ for some $\tii \in \setIIb$, then $\FktLemII(\ti) \ge \FktLemII(\tii)$ for all $\ti \in \setII$.
    Furthermore, if additionally $\FktLemII\pprm(\ti) > 0$ for all $\ti \in (\tI,\tII)$, then $\FktLemII(\ti) > \FktLemII(\tii)$ for all $\ti \in \setII\backslash\{\tii\}$.
  \item \label{itm:taylor:negative}
    If $\FktLemII\pprm(\ti)\le 0$ for all $\ti \in (\tI,\tII)$,
    then $\FktLemII(\ti) \ge \min\{\FktLemII(\tI),\FktLemII(\tII)\}$ for all $\ti \in \setII$.
    Furthermore, if $\FktLemII\pprm(\ti) < 0$ for all $\ti \in (\tI,\tII)$, then $\FktLemII(\ti) > \min\{\FktLemII(\tI),\FktLemII(\tII)\}$ for all $\ti \in (\tI,\tII)$.
  \end{enumerate}
\end{lemma}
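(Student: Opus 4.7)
The plan is to reduce both parts of the lemma to standard one-variable calculus facts about convex and concave functions, applied to $f$ restricted to the open interval $V$ and then extended to the closed interval $U$ by continuity.

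For Part~\ref{itm:taylor:positive}, I would first observe that the assumption $f''(u) \ge 0$ on $(u_1, u_2)$ makes $f$ convex on $V$ (any open sub-interval of $U$ containing $(u_1,u_2)$ on which $f$ is twice differentiable), so $f'$ is non-decreasing on $V$. Given the critical point $f'(u^*) = 0$, monotonicity of $f'$ forces $f'(u) \le 0$ for $u \in V$ with $u < u^*$ and $f'(u) \ge 0$ for $u \in V$ with $u > u^*$. By the fundamental theorem of calculus applied on $V$, followed by continuity of $f$ at the endpoints, $f$ is non-increasing on $[u_1, u^*]$ and non-decreasing on $[u^*, u_2]$, yielding $f(u) \ge f(u^*)$ throughout $U$. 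For the strict version, if $f''(u) > 0$ on $(u_1,u_2)$ then $f'$ is strictly increasing on $V$, and the same argument gives strict monotonicity on each side of $u^*$, hence $f(u) > f(u^*)$ for $u \ne u^*$.

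For Part~\ref{itm:taylor:negative}, I would invoke the standard fact that a concave function on a closed interval attains its minimum at an endpoint. Specifically, $f'' \le 0$ on $(u_1,u_2)$ makes $f$ concave on $V$, so for any $u \in [u_1, u_2]$ write $u = (1-\lambda)u_1 + \lambda u_2$ with $\lambda \in [0,1]$; concavity (extended to the closed interval by continuity) gives $f(u) \ge (1-\lambda)f(u_1) + \lambda f(u_2) \ge \min\{f(u_1), f(u_2)\}$. The strict variant follows identically from strict concavity when $f'' < 0$, since then the first inequality becomes strict whenever $\lambda \in (0,1)$, i.e.\ whenever $u \in (u_1, u_2)$.

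I expect no substantial obstacle here, as the lemma is essentially a restatement of elementary properties of convex functions on the real line (as referenced in \cite{Roberts1974Convex}). The only mild technical point is the interplay between the set $V$ on which differentiability is assumed and the closed set $U$ on which only continuity is assumed; this is handled uniformly by noting that $(u_1, u_2) \subseteq V \subseteq U$ means the differentiability-based arguments apply on the open interval, and continuity of $f$ on $U$ extends the resulting monotonicity or concavity inequalities to the closed interval by passing to limits at the endpoints.
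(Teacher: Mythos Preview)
Your proposal is correct and is precisely the kind of elementary convex-function argument the paper has in mind; indeed, the paper does not spell out a proof at all but simply states that the lemma ``follows from elementary properties of convex functions on the real line'' with a reference to \cite{Roberts1974Convex}. Your sketch fills in exactly those details, and the handling of the $V$ versus $U$ issue via continuity at the endpoints is the right way to close the only minor gap.
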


\begin{proof}[Proof of \cref{lem:main}]
  Let $\setI \defas \{(\aii,\bii) \in \RR^2 : 0 < \aii < \bii < 1\}$, fix arbitrary $(\aii,\bii) \in \setI$ and define
  \begin{align}
     \rhommin &\defas \frac{\max\{\aii \bii, \oaii \obii \}}{\CST}, & \rhomin &\defas \frac{\min\{\aii \bii, \oaii\obii\}}{\CST}, & \rhomax &\defas \frac{\aii\obii}{\CST}. \label{eq:rhomindef}
  \end{align}
  We shall adopt the simplified notation $\FktIb(\rho) \defas \FktI(\rho,\aii,\bii)$, suppressing the fixed parameters $(\aii, \bii)$.
  For $\rho \in [0, \rhomax)$, we have the derivatives
  \begin{align}
    \FktIb\prm(\rho) &= \frac 12 \log_2\left(\frac{1+\rho}{1-\rho}\right) + \CST \log_2\left(\frac{(\oaii \bii - \CST\rho)(\aii \obii - \CST\rho)}{(\aii \bii + \CST\rho)(\oaii \obii + \CST\rho)}\right) , \label{eq:phi_firstderiv} \\
    \FktIb\pprm(\rho)                   &= \frac{\CST^2}{\log 2} \bigg(\frac{1}{\CST^2(1-\rho^2) } - \frac{1}{\oaii \bii - \CST\rho} \label{eq:d2} 
    \\*\nonumber&\qquad\qquad\qquad - \frac{1}{\aii\obii - \CST\rho} - \frac{1}{\oaii \obii + \CST\rho} - \frac{1}{\aii\bii + \CST\rho}\bigg) . 
  \end{align}
                      We write $\FktIb\pprm(\rho) = \frac{p(\rho)}{q(\rho)}$, where both $p$ and $q$ are polynomials in $\rho$, and choose
  \begin{align}
    q(\rho) &= \log(2) (1-\rho^2)(\oaii \bii - \CST\rho)
    \\*\nonumber&\qquad\qquad \times (\aii \obii - \CST\rho)(\oaii \obii + \CST\rho)(\aii\bii + \CST\rho) ,
  \end{align}
  such that $q(\rho) > 0$ for $\rho \in [0,\rhomax)$. 
  By \cref{eq:d2}, $p(\rho)$ is given by
  \begin{align}
    p(\rho) &= (\oaii \bii - \CST\rho)(\aii \obii - \CST\rho)(\oaii \obii + \CST\rho)(\aii\bii + \CST\rho) 
              \nonumber\\&\qquad - \CST^2(1-\rho^2)\Big( (\aii \obii - \CST\rho)(\oaii \obii + \CST\rho)(\aii\bii + \CST\rho)
    \nonumber\\&\qquad\qquad + (\oaii \bii - \CST\rho)(\oaii \obii + \CST\rho)(\aii\bii + \CST\rho)
    \nonumber\\&\qquad\qquad + (\oaii \bii - \CST\rho)(\aii \obii - \CST\rho)(\aii\bii + \CST\rho)
    \nonumber\\&\qquad\qquad + (\oaii \bii - \CST\rho)(\aii \obii - \CST\rho)(\oaii \obii + \CST\rho)
    \Big) .
    \label{eq:p}
  \end{align}
  This entails $\deg(p) \le 5$ and a careful
  calculation of the coefficients reveals $\deg(p) \le 3$.

  We will now demonstrate that there is a unique point $\rho^* \in (0,\rhomax)$, such that $p(\rho^*) = 0$.
  To this end, reinterpret $\FktIb\pprm(\rho)$ as a rational function of $\rho$ on $\RR$. We evaluate \cref{eq:p} and use $\aii < \bii$ to obtain the two inequalities
  \begin{align}
    p(0)
                  &= \aii \oaii \bii \obii \left( \aii \oaii \bii \obii - \CST^2 \right)
                                                                                                                         > 0 , \text{ and}\\
                                                                 p(\rhomax)           &= - \big(\CST^2- (\aii \obii)^2\big) (\bii - \aii) \obii \aii 
                                                                                                                                                      < 0 .
  \end{align}
  The number of roots of 
  $p$ in $(0,\rhomax)$ is thus odd and at most equal to its degree, i.e., either one or three.
    If we have $\rhomin \le 1$, then evaluation of \cref{eq:p} readily yields $p(-\rhomin) \le 0$. If, on the other hand, $\rhomin > 1$, we obtain $p(-\rhommin) \le 0$ from \cref{eq:p}. Thus, $p$ has at least one negative root and a unique root $\rho^* \in (0,\rhomax)$.
              \Cref{fig:sketch} qualitatively illustrate the behavior of $p(\rho)$ and $\FktIb\pprm(\rho)$.
  \begin{figure}[ht]
    \centering
    \begin{subfigure}[t]{0.48\textwidth}
      \centering
      \begin{tikzpicture}
  \newcommand{\ppoly}{0.1*(x-1.1)*(x+0.8)*(x-5)}
  \newcommand{\qpoly}{(0.8*(2-x)*(1.5+x))}
  \begin{axis}[sketch style,%
    xtick={-1.5,2},%
    xticklabels={$-\rhomin$, $\rhomax$},%
    ytick={0},%
    xmin=-3,%
    xmax=3,%
    ymin=-3,%
    ymax=3,%
    ]
    \addplot[myred,dashed,samples=100] {\ppoly};
    \addlegendentry{$p(\rho)$};
    \addplot[myblue,domain=-1.49:1.99,samples=100,id=plot_A1] function{\ppoly/\qpoly};
    \addplot[myblue,domain=2.01:,samples=100,id=plot_A2] function{\ppoly/\qpoly};
    \addplot[myblue,domain=:-1.51,samples=100,id=plot_A3] function{\ppoly/\qpoly};
    \addlegendentry{$\FktIb''(\rho)$};
  \end{axis}
\end{tikzpicture}
      \caption{$\rhomin < 1$}
      \label{fig:sketch1}
    \end{subfigure}%
    \begin{subfigure}[t]{0.48\textwidth}
      \centering
      \begin{tikzpicture}
  \pgfmathsetmacro\myrhoplus{0.5}
  \pgfmathsetmacro\myrhoplusright{\myrhoplus+0.01}
  \pgfmathsetmacro\myrhoplusleft{\myrhoplus-0.01}
  \newcommand{\nppoly}{0.05*(0.2-x)*(2.3+x)*(5-x)}
  \newcommand{\nqpoly}{(0.3*(1-x^2)*(2+x)*(3+x)*(\myrhoplus-x)*(4-x))}
  \begin{axis}[sketch style,%
    xtick={-3,-2,-1,\myrhoplus,1},%
    xticklabels={$-\rhommin$, $-\rhomin$, $-1$, $\rhomax$, $1$},%
    ytick={0},%
    xmin=-3.5,%
    xmax=1.5,%
    ymin=-0.8,%
    ymax=0.8,%
    ]
    \addplot[myred,dashed,samples=100] {\nppoly};
    \addlegendentry{$p(\rho)$};
    \addplot[myblue,domain=-3.5:-3.01,samples=100,id=plot_B1] function{\nppoly/\nqpoly};
    \addplot[myblue,domain=-2.99:-2.01,samples=100,id=plot_B2] function{\nppoly/\nqpoly};
    \addplot[myblue,domain=-1.99:-1.01,samples=100,id=plot_B3] function{\nppoly/\nqpoly};
    \addplot[myblue,domain=-0.99:\myrhoplusleft,samples=100,id=plot_B4] function{\nppoly/\nqpoly};
    \addplot[myblue,domain=\myrhoplusright:0.99,samples=100,id=plot_B5] function{\nppoly/\nqpoly};
    \addplot[myblue,domain=1.01:1.45,samples=100,id=plot_B6] function{\nppoly/\nqpoly};
    \addlegendentry{$\FktIb''(\rho)$};
  \end{axis}
\end{tikzpicture}
      \caption{$\rhomin > 1$}
      \label{fig:sketch2}
    \end{subfigure}
    \caption{Sketch of $p(\rho)$ and $\FktIb\pprm(\rho)$. }
    \label{fig:sketch}
  \end{figure}

  Consequently, $\FktIb\pprm(\rho) > 0$ for $\rho \in (0,\rho^*)$.
  By \scref{lem:taylor}{itm:taylor:positive}, $\FktIb(\rho) > \FktIb(0) = 0 $ for $\rho \in (0, \rho^*]$ as $\FktIb\prm(0) = 0$.
    Since $\FktIb\pprm(\rho) < 0$ for $\rho \in (\rho^*, \rhomax)$, we have $\FktIb(\rho) > \min\{\FktIb(\rho^*), \FktIb(\rhomax)\}$ for all $\rho \in (\rho^*, \rhomax)$, by~\scref{lem:taylor}{itm:taylor:negative}.
  In total, $\FktIb(\rho) > \min\{0,\FktIb(\rhomax)\}$ for $\rho \in (0,\rhomax)$.

  As $\FktIb(0) = 0$, it remains to show that $\FktI(\rhomax,\aii,\bii) > 0$ for $(\aii, \bii) \in \setI$. To this end, we introduce the transformation
  \begin{align}
    (\aii,\bii) \longmapsto (\cc,\xx) \defas  \left(\frac{\log\frac{\aii}{\bii}}{\log\frac{\aii\obii}{\oaii \bii}} \label{eq:xy}\, , \, \sqrt{\frac{\aii\obii}{\oaii \bii}} \right) ,
  \end{align}
  a bijective mapping from $\setI$ to $(0,1)^2$ with the inverse
  \begin{align}
    (\cc,\xx) \longmapsto (\aii,\bii) = \left( \frac{\xx^{2\cc}-\xx^2}{1-\xx^2} \ ,\,    \frac{1-\xx^{2-2\cc}}{1-\xx^{2}}\right) . \label{eq:ab}
  \end{align}
  In terms of $\cc$ and $\xx$, we have $\FktI(\rhomax,\aii,\bii) = \FktIII(\cc,\xx)$, where
  \begin{align}
    \FktIII(\cc,\xx) \defasAlign 1-\binEnt{\frac{1}{2} + \frac{\xx}{1 + \xx}} - \binEnt{\frac{\xx^{2\cc}-\xx^2}{1-\xx^2}} + \frac{1-\xx^{2-2\cc}}{1-\xx^2} \binEnt{\xx^{2\cc}} \\
    &= 1-\binEnt{\frac{1+3x}{2+2x}} + \frac{\binEnt{\xx^2}}{1-\xx^2} + \frac{\xx^{2\cc} \binEnt{\xx^{2-2\cc}} + \xx^{2-2\cc} \binEnt{\xx^{2\cc}}}{\xx^2-1} .
  \end{align}
  We fix a particular $\xx \in (0,1)$ and use the simplified notation $\FktV(\cc) \defas \FktIII(\cc,\xx)$, obtaining the derivatives
  \begin{align}
    \FktV\prm(\cc) &= \frac{2\log(\xx)}{(\xx^2-1)\log(2)} \bigg[ 2 \xx^{2\cc} \cc \log(\xx) 
    \\*\nonumber&\qquad\qquad + \xx^{2(1-\cc)} \log(1-\xx^{2\cc}) - \xx^{2\cc} \log(\xx^{2\cc} -\xx^2)\bigg] , \\
    \label{eq:log_ineq_applied} \FktV\pprm(\cc) &= \frac{4\log(\xx)^2 \xx^{2\cc}}{(1-\xx^2) \log(2)}
                   \Bigg[ \left(\frac{1}{\xx^{-2(1-\cc)}-1} + \log(1-\xx^{2(1-\cc)}) \right) 
    \\*\nonumber&\qquad\qquad\qquad\qquad\qquad + \frac{\xx^2}{\xx^{4\cc}} \left( \log(1-\xx^{2\cc}) + \frac{1}{\xx^{-2\cc}-1} \right) \Bigg] .
  \end{align}
  By applying \cref{lem:basic_inequalities} twice, we obtain $\FktV\pprm(\cc) > 0$. Thus, $\FktV(\cc) > \FktV(\frac 12)$ by~\scref{lem:taylor}{itm:taylor:positive} as $\FktV\prm(\frac 12) = 0$. It remains to show that $\FktVII(\xx) \defas \FktIII(\frac 12,\xx) > 0$. Note that $\FktVII(0) = \FktVII(1) = 0$ and
  \begin{align}
    \FktVII\prm(\xx) = \frac{1}{(1+\xx)^2} \log_2\big[ (1+3\xx) (1-\xx) \big] ,
  \end{align}
  for $\xx \in [0,1)$. If $\FktVII(\xx) \le 0$ for any $\xx \in (0,1)$ then $\fOrig$ necessarily attains its minimum in $(0,1)$ and there exists $\xx^* \in (0,1)$ with $\FktVII(\xx^*) \le 0$ and $\FktVII\prm(\xx^*) = 0$.
  As $\xx^* = \frac 23$ is the only point in $(0,1)$ with $\FktVII\prm(\xx^*) = 0$ and $\FktVII\left(\frac 23\right) = \log_2\left(\frac{27}{25}\right) > 0$, this concludes the proof.
\end{proof}

\section{Proof of \texorpdfstring{\Cref{pro:uniqueness}}{Proposition~\ref{pro:uniqueness}}}
\label{sec:proof:uniqueness}

We may assume $0<\rho<1$ and $\frac 12 \le a \le b \le 1$ by virtue of \cref{eq:symmetries}.
Clearly, $\gOrig=\pm \fOrig=\pm\bvt{i}$ for some $i \in \Ntoo{n}$ is a sufficient condition to
maximize $\mutInf[big]{\fOrig(\rvt x)}{\gOrig(\rvt y)}$. A careful inspection of the proof of \cref{thm:main} shows that this condition is also necessary.

In the following, we will use the notation of \cref{sec:main-result}.
As $b=1$ implies $\mutInf[big]{\fOrig(\rvt x)}{\gOrig(\rvt y)} = 0$, we assume $\frac 12 \le a \le b < 1$.
For equality in \cref{thm:main}, we need either $\phi(\rhoI, \ai, \bi) = 0$ or $\phi(\rhoII, \oai, \bi) = 0$.
By \cref{lem:main}, $\phi(\rhoII, \oai, \bi) > 0$ unless $\oai = \ai = \frac 12$, which in turn implies $\phi(\rhoII, \oai, \bi) = \FktI(\rhoI, \ai, \bi)$. The equality $\FktI(\rhoI, \ai, \bi) = 0$ can only occur for $\bi=\ai$, implying $\rhoI = \rho$. We want to show that $\FktI(\rho,\ai,\ai) = 0$ implies $\ai = \frac 12$. For $\ai \neq \frac 12$ we have
\begin{align}
  \frac{\partial \FktI}{\partial \rho} (\rho,\ai,\ai) &= \frac 12 \log_2\left(\frac{1+\rho}{1-\rho}\right) - \ai \oai \log_2\left(\frac{\rho}{\ai\oai \orho^2}+1 \right) , \label{eq:psi2_diff1} \\
  \frac{\partial^2 \FktI}{\partial \rho^2} (\rho,\ai,\ai) &= \frac{\rho(1-2\ai)^2}{\log(2) (\ai+\rho\oai)(1-\ai\orho)(1-\rho^2)} > 0 . \label{eq:t1_f2}
\end{align}
\sCref{lem:taylor}{itm:taylor:positive} now yields $0 = \FktI(0,\ai,\ai) < \FktI(\rho,\ai,\ai)$ as $\frac{\partial \FktI}{\partial \rho} (0,\ai,\ai) = 0$.
By the strict convexity of $\FktIV(\mytheta{}, \frac 12, \frac 12)$ in $\mytheta{}$, necessarily $\thetarho = \frac{\pair{\fOrig}{\Tr_{\rho} \gOrig}}{4} \in \{\thetaplus, \thetaminus\} = \pm \frac{\rho}{4}$. 
The Cauchy-Schwarz inequality together with \cite[Proposition~2.50]{ODonnell2014Analysis} yields $\rho^2 = \pair[big]{ \fOrig}{\Tr_{\rho} \gOrig}^2 = \pair[big]{\Tr_{\sqrt{\rho}} \fOrig}{\Tr_{\sqrt{\rho}} \gOrig} ^2 \le \pair{\fOrig}{\Tr_\rho \fOrig} \pair{\gOrig}{\Tr_\rho \gOrig} \le \rho^2$. Thus, necessarily $\gOrig=\pm \fOrig = \pm \bvt{i}$ for some $i \in \Ntoo{n}$ by \cite[Proposition~2.50]{ODonnell2014Analysis}.

\section{Discussion}
\label{sec:discussion}
The key idea underlying the proof of \cref{thm:main} is to split $\mytheta{1} = \tauplus + \tauminus$ into its positive and negative part (see \cref{sec:main-result}).
After reducing the problem to the inequality in \cref{lem:main}, the remaining proof is routine analysis. However, \cref{lem:main} might turn out to be useful in the context of other converse proofs, in particular for the optimization of rate regions with binary random variables.

\section*{Acknowledgment}
The authors would like to thank the anonymous referee for very helpful comments, that greatly improved the readability of the paper.

\bibliographystyle{imsart-number}
\bibliography{IEEEabrv,MonthAbr,literature}

\label{page:last-page}
\end{document}